\newtheorem{lemma}{Lemma} 
\title{\LARGE \bf
Nonlinear Kalman Filtering based on Self-Attention Mechanism 
and Lattice Trajectory Piecewise Linear Approximation
}
\author{Jiaming Wang, Xinyu Geng and Jun Xu
\thanks{*This work was supported in part by the National Natural Science Foundation of China under Grant 62173113, and 
in part by the Science and Technology Innovation Committee of Shenzhen Municipality under Grant GXWD20231129101652001, 
and in part by Natural Science Foundation of Guangdong Province of China under Grant 2022A1515011584.}
\thanks{Jiaming Wang, Xinyu Geng and Jun Xu are with School of Mechanical Engineering and Automation,
Harbin Institute of Technology, Shenzhen, 518055, China
        {\tt\small 21s153144@stu.hit.edu.cn, 22s153095@stu.hit.edu.cn, xujunqgy@hit.edu.cn}}%
}
\begin{document}

\maketitle
\thispagestyle{empty}
\pagestyle{empty}

\begin{abstract}

The traditional Kalman filter (KF) is widely applied in control systems, 
but it relies heavily on the accuracy of the system model and noise parameters, 
leading to potential performance degradation when facing inaccuracies.
To address this issue, 
introducing neural networks into the KF framework offers a data-driven solution 
to compensate for these inaccuracies, improving the filter's performance while maintaining interpretability. 
Nevertheless, existing studies mostly employ recurrent neural network (RNN), 
which fails to fully capture the 
dependencies among state sequences and lead to an unstable training process. 
In this paper, we propose a novel 
Kalman filtering algorithm named the attention Kalman filter (AtKF), 
which incorporates a self-attention 
network to capture the dependencies among state sequences. To address the instability in the recursive training 
process, a parallel pre-training strategy is devised. Specifically, this strategy involves piecewise linearizing the 
system via lattice trajectory piecewise linear (LTPWL) expression, and generating pre-training data through a batch 
estimation algorithm, which exploits the self-attention mechanism's parallel processing ability.
Experimental results on a two-dimensional nonlinear system demonstrate that AtKF outperforms other filters under noise disturbances and model mismatches.
       
\end{abstract}

\section{INTRODUCTION}

In the field of modern control theory, 
the Kalman filter (KF) \cite{kalman1960} and its variant, the extended Kalman filter 
(EKF) \cite{maybeck1982stochastic}, are fundamental tools for state estimation in control system design.
However, the performance of these model-based filters depends significantly on the accuracy of the system model and noise 
parameters. 
Inaccurate settings can lead to a notable decline in KF's performance.

To address this challenge, many studies improved KF by integrating data-driven approaches, 
which are mainly categorized into external combination and internal embedding \cite{bai2023state}. 
External combination approaches employ neural networks for either identifying system parameters or enhancing fusion 
filtering with KF, where the neural network works independently of KF. 
Gao et al. \cite{gao2020rl} proposed an adaptive KF that uses reinforcement learning to estimate process noise covariance 
dynamically, thus improving navigation accuracy and robustness. 
Tian et al. \cite{tian2021state} devised a battery state estimation method that merges the outputs of a deep neural network 
with the ampere-hour counting method through a linear KF, yielding faster and more precise estimations.
Internal embedding strategies integrate neural networks within the KF framework and replace certain parts of the traditional KF.
Jung et al. \cite{jung2020mnemonic} introduced a memorized KF that uses long short-term memory (LSTM) networks 
\cite{LongShortTermMemory} to learn transition probability density functions.
This approach effectively surpasses the Markovian and linearity constraints inherent in traditional KF.
KalmanNet \cite{revach2022kalmannet} combined KF with gated recurrent units 
(GRU) \cite{chung2014empirical} to estimate the 
Kalman gain, showing improved filtering performance in model mismatched and nonlinear systems.
Directly embedding neural networks into the KF framework represents a novel and promising research direction.

However, most current approaches employ LSTM or GRU to learn from time series data. 
These recurrent neural networks (RNN) perform poorly in comprehensively capturing the dependencies in time series data.
Additionally, their recursive training processes suffer from instability and inefficiency.

Inspired by KalmanNet \cite{revach2022kalmannet}, we introduce a novel technique that incorporates the self-attention mechanism 
from Transformer \cite{vaswani2017attention} into the Kalman filtering.
By analyzing state sequences over historical time windows, our method aims to capture dependencies among state sequences more 
effectively, thereby enhancing estimation accuracy and robustness. 
However, due to KF's recursive structure, directly applying the attention mechanism within KF leads to an inherently recursive 
training process, which is incapable of addressing the issues of instability and inefficiency. To solve this, we design a 
pre-training method that constructs all pre-training data through batch estimation. 
It estimates the system states over a period in one go, 
thereby avoiding the recursive process. 
This approach sets up better starting points for the attention 
network, enabling it to replicate the benefits of extensive training through a minimal number of iterations.  

Nevertheless, for batch estimation of nonlinear systems, 
it is necessary to perform linearization first, 
for which the lattice trajectory piecewise linear (LTPWL) expression 
offers an analytical and compact solution. The lattice piecewise linear (PWL) expression is named for its algebraic properties 
of performing max and min operations on affine functions \cite{tarela1999region}. 
Tarela et al. \cite{tarela1999region} summarized several representation methods of lattice PWL functions from \cite{lin1994explicit} 
\cite{tarela1990representation}. 
Ovchinnikov \cite{ovchinnikov2002max} provided proof that lattice PWL can represent any PWL function, 
and Xu et al. \cite{xu2016irredundant} introduced methods for removing redundant terms and literals in lattice PWL. 
Wang et al. \cite{wang2021lattice} proposed a LTPWL method for approximating nonlinear systems with lattice PWL.
Here, we use the LTPWL to perform piecewise linearization of the nonlinear system, 
then generate pre-training data through a batch estimation algorithm for non-nested training of the network.

The main contributions of this paper can be summarized as follows:
\begin{itemize}
    \item A Kalman filtering algorithm embedded with a simplified attention mechanism is proposed, 
    which better captures the dependencies among state sequences,  thereby improving the accuracy and robustness of 
    state estimation.
    \item A pre-training method based on the LTPWL and batch estimation algorithm is designed, 
    addressing the instability and inefficiency of the recursive training process, 
    while fully leveraging the parallel processing capabilities of the self-attention network.
\end{itemize}

The paper is structured as follows: Section 2 introduces the self-attention mechanism and LTPWL expression.
Section 3 details the structure of AtKF and the pre-training method. Section 4 evaluates our approach through experiments 
on a two-dimensional nonlinear system, and Section 5 concludes the paper.


\section{PRELIMINARIES}

\subsection{Self-attention Mechanism}

\begin{figure} 
    \centering
    \includegraphics[width=0.35\textwidth]{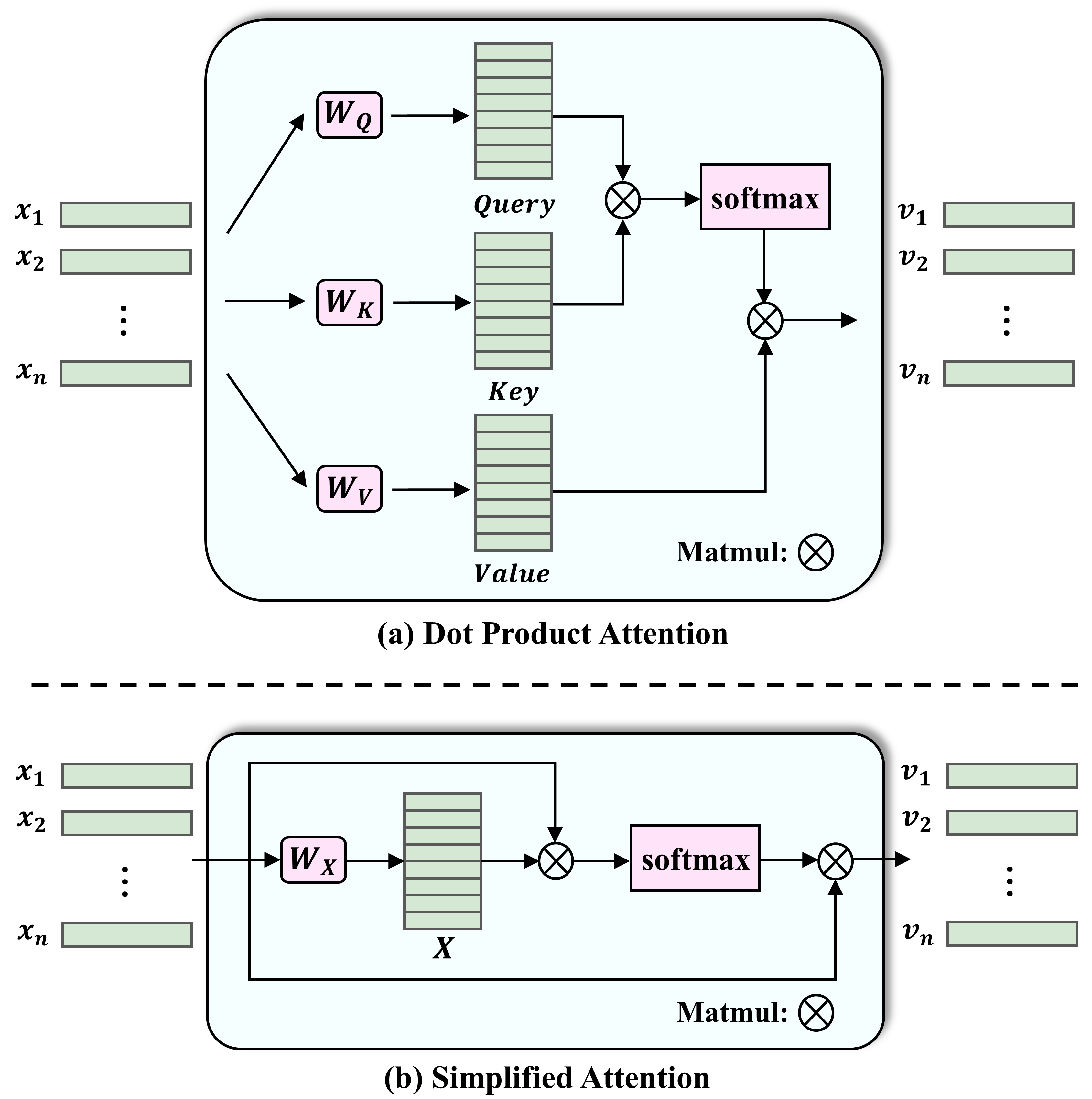} 
    \caption{(a) Self-attention Mechanism, (b) Simplified Attention Network.}
    \label{selfAttention}
\end{figure}

This section introduces the operation of the attention mechanism and our proposed simplified attention network. The self-attention 
mechanism is shown in Fig. \ref{selfAttention}(a). 
It transforms the input sequence ${x_{1},x_{2}, \ldots, x_{n}}$ into query, key, and value matrices through 
linear mappings. For each sequence element $x_i$, it calculates the dot products with all sequence elements,
forming an attention distribution through softmax that indicates the elements' dependencies.  
This attention distribution is then multiplied with value to produce the output sequence $\{v_{1}, v_{2}, \ldots, v_{n}\}$, 
where each element $v_i$ integrates information from the entire sequence. 
This ensures each processed element reflects the influence of every other element, overcoming the limitations of distance.

A simplified version of this mechanism, as depicted in Fig. \ref{selfAttention}(b), 
streamlines the process by using a single matrix $X$ for multiplying both the attention distribution and the output sequence.
This is feasible because $d_{embed}=d_{model}$, and the dimension of $X$ matches that of the input sequence matrix, 
eliminating the need for separate and extra linear mappings. 
Given the small amount of sequence data and the simple distribution of features in the Kalman filtering process, 
employing the full multi-head attention mechanism can increase training difficulty and lead to overfitting. 
By reducing the number of parameters, this simplified approach boosts efficiency without sacrificing the model's 
ability to capture crucial dependencies.

\subsection{Lattice Trajectory Piecewise Linear Expression}

\begin{figure} 
   \centering
   \includegraphics[width=0.3\textwidth]{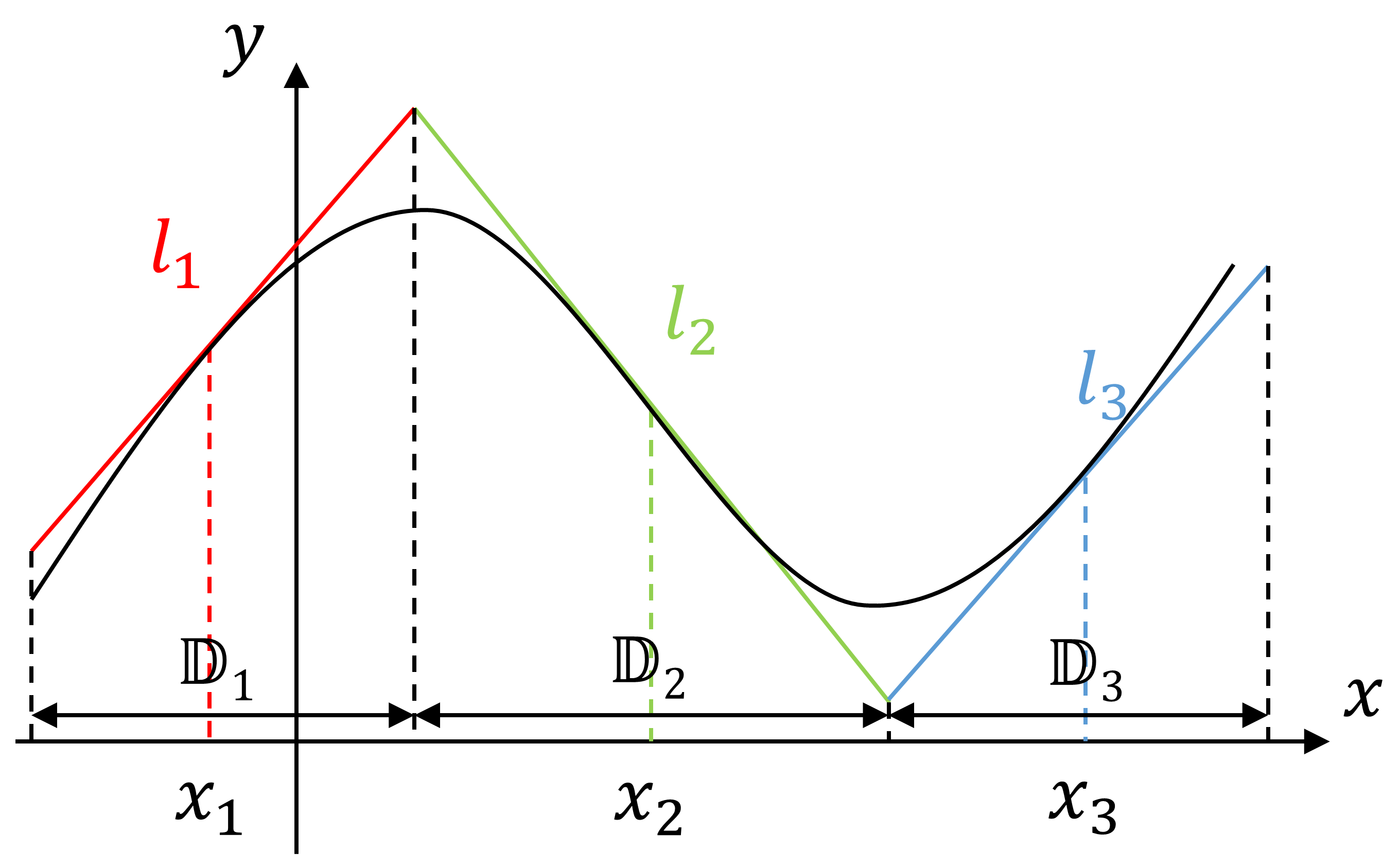}
   \caption{An example of lattice trajectory piecewise linear expression.}
   \label{LTPWL_example}
\end{figure}

The LTPWL expression is an approximate method of constructing a lattice PWL expression for a nonlinear function.
It can simultaneously accomplish the linearization of the nonlinear system and the construction of 
lattice PWL expression.
Its main process involves selecting a set of linearization points along the system's state 
trajectory, constructing linear segments at each of these points, and finally using these segments to build a 
lattice PWL expression to approximate the original nonlinear system.

Taking the nonlinear function shown in Fig. \ref{LTPWL_example} as an example, by selecting
points \(x_{1}, x_{2},\) and \(x_{3}\) along the state trajectory (here, the $x$-axis), and creating corresponding 
linear segments \(l_{1}, l_{2},\) and \(l_{3}\) at each point, we represent the PWL function as shown 
in (\ref{latticePWL}),
\begin{equation} \label{latticePWL}
   f(x) = \mathop{\max}\{\mathop{\min}\{l_1,l_2\},\mathop{\min} \{l_1,l_3\}\},\forall x \in \mathbb{R},
\end{equation}
this method simplifies the approximation of nonlinear functions without defining specific interval ranges for 
$x$, making it a more compact solution compared with traditional piecewise form (\ref{conventionalPWL}),
\begin{equation} \label{conventionalPWL}
   f(x) =
   \begin{cases}
      l_1(x), \quad x \le x_{1}, \\
      l_2(x),\quad x_{1} \le x \le x_{2}, \\
      l_3(x), \quad x \ge x_{2}.
   \end{cases}
\end{equation}

The general form of the lattice PWL is shown in (\ref{latticeRepresent}),
\begin{equation} \label{latticeRepresent}
    f = \mathop{\max}\limits_{i=1,\dots,N} \left \{ \mathop{\min}_{j \in \tilde{I}_{\ge,i}}\{l_j\} \right \},
\end{equation}
where $N$ is the number of base regions $\mathbb{D}_i$, with each $\mathbb{D}_i$ being a polyhedron satisfying 
$\{x|l_i(x) = l_j(x), j \neq i\} \cap \mathbb{D}_i = \emptyset$. Furthermore, $\tilde{I}_{\ge,i}$ denotes the 
index set of affine functions in the base region $\mathbb{D}_i$ that are greater than or equal to the linear segment 
$l_i(x)$, i.e., $\tilde{I}_{\ge,i} = \{j|l_j(x) \geq l_i(x), \forall x \in \mathbb{D}_i\}$.

For the PWL function shown in Fig. \ref{LTPWL_example} with three basic regions, 
$N=3$. Taking the linear segment $l_1$ as an example, 
in the basic region $\mathbb{D}_1$, 
the affine functions that are greater than or equal to $l_1$ are $l_1$ and $l_2$. 
Then, a minimum operation is applied to $l_1$ and $l_2$ to construct a ``term" in the lattice PWL expression, 
resulting in $\mathop{\min}\{l_1,l_2\}$. 
Terms for linear segments $l_2$ and $l_3$ are constructed in a similar way, 
resulting in $\mathop{\min}\{l_1,l_2\}$ and $\mathop{\min}\{l_1,l_3\}$, respectively. 
Then, a maximum operation on these three terms yields the final lattice PWL expression (\ref{latticePWL}). 
When evaluating the lattice piecewise linear expression, 
it is only necessary to substitute the value of the variable $x$, 
without the need to consider the interval range of the variable as in the traditional piecewise form. 
Moreover, we can conveniently identify the linear segment $l_{i}$ 
that represents the current system dynamics through comparison operations. 
As we will see in Section \ref{preTrainSection}, 
this property of the lattice expression is particularly suited for batch estimation algorithms to generate pre-training data.


\section{Kalman Filtering Algorithm with Attention Mechanism}

This section offers an overview of the AtKF framework 
and the pre-training approach based on the LTPWL expression. 
It is structured into four parts: the system model, the overall architecture, the network structure, and the training 
methodology.

\subsection{System Model}
Considering the discrete-time nonlinear system given by (\ref{nonlinearSys}),
\begin{subequations} \label{nonlinearSys}
   \begin{align} 
           x_{k} &= f(x_{k-1}) + w_{k}, \label{stateEquation} \\
           y_{k} &= h(x_{k}) + v_{k}, \label{outputEquation} \\  
           w_{k} &\sim N(0, Q), v_{k} \sim N(0, R),
   \end{align}
\end{subequations}
where $f$ and $h$ represent the nonlinear state transition and observation functions, respectively, $x_k$ denotes 
the state vector at time step $k$, and $y_k$ represents the observation at $k$. $w_k$ and $v_k$ correspond to the 
process noise and observation noise at $k$, respectively, both assumed to be Gaussian white noise with their 
covariance matrices $Q$ and $R$.

\begin{figure*} 
    \centering
    \includegraphics[width=0.9\textwidth]{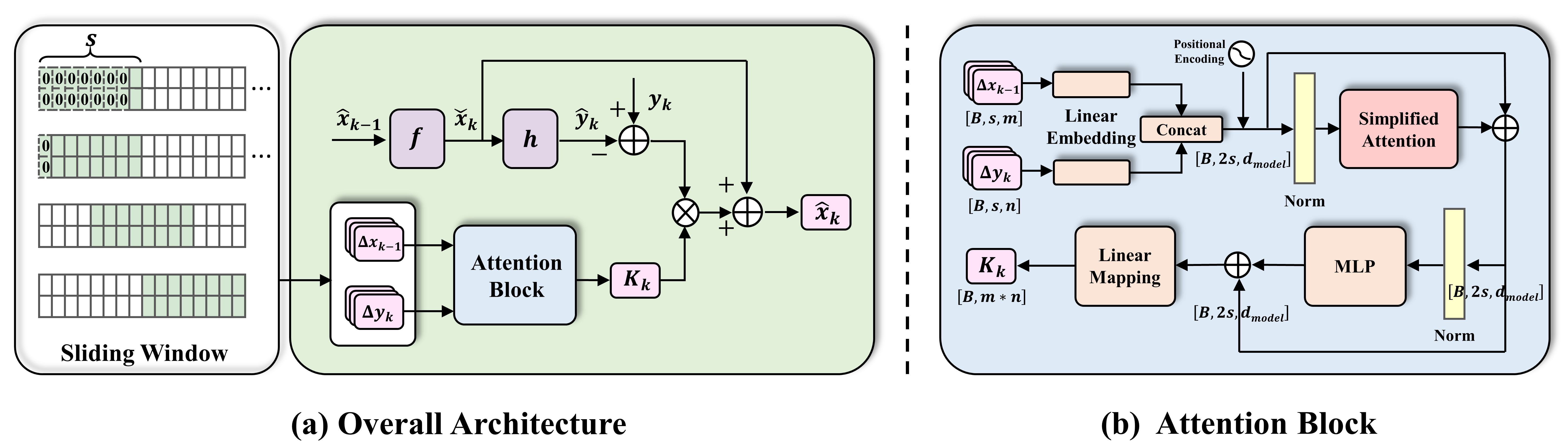}
    \caption{(a) overall architecture, (b) self-attention mechanism network.}
    \label{OverallArch}
\end{figure*}

\subsection{Overall Architecture}

As shown in Fig. \ref{OverallArch}(a), the framework aligns with the traditional Kalman filtering algorithm, 
using the system model for state recursion and output prediction at each time step. 
At any given moment, such as time step $k$, 
$\check{x}_{k}$ and $\hat{y}_{k}$ represent the prior estimate of the state $x_{k}$ and the predicted system output, respectively, 
while $\hat{x}_{k}$ is the posterior state estimate. 
A self-attention mechanism is incorporated to predict the Kalman gain $K_{k}$. 
The gain, acting as a fusion coefficient between model predictions and system observations, 
adjusts based on process $w_{k}$ and observation noise $v_{k}$. 
By leveraging the self-attention mechanism, the network captures sequential data dependencies more effectively, 
leading to enhanced estimation accuracy by fitting the system's noise characteristics.

The input to the self-attention mechanism network comprises two types of features. 
which can be expressed as in (\ref{inputFeature}) at time step $k$,
\begin{subequations} \label{inputFeature}
    \begin{align}
        \Delta x_{k-1} &= \hat{x}_{k-1} - \check{x}_{k-1} \label{feature1}, \\
        \Delta y_{k}   &= y_{k} - \hat{y}_{k} \label{feature2},
    \end{align}
\end{subequations}
$\Delta x_{k-1} \in \mathbb{R}^{m}$ represents the forward update difference, 
and $\Delta y_{k} \in \mathbb{R}^{n}$ represents the innovation. 
To fully utilize the sequence processing capability of the self-attention network 
and to capture the features among state sequences thoroughly, 
values from a past time window for each feature are collected to form a sequence, 
which then serves as the network input. 
Moreover, as the filtering process goes, the window slides accordingly. 
Assuming the size of the time window is $s$, 
the network input at $k$ can be represented as in (\ref{InputSeq}),
\begin{equation} \label{InputSeq}
    X_{\text{input}} = \{ \Delta x_{k-s}, \cdots, \Delta x_{k-1}, \Delta y_{k-s+1}, \cdots \Delta y_{k} \}.
\end{equation}

If the historical data is insufficient (i.e., when $k < s$), 
the network fills missing feature values in the input sequence with zeros, as shown in Fig. \ref{OverallArch}(a). 
Then the network uses the processed input sequence to predict the Kalman gain $K_{k}$, 
which is subsequently utilized to update the system state.

Similar to the traditional KF, the overall filtering framework can also be summarized into prediction 
and update steps:
\begin{itemize}
   \item Prediction step:
   \begin{subequations}
           \begin{align}
               \check{x}_{k} &= f(\hat{x}_{k-1}), \\
               \hat{y}_{k}   &= h(\check{x}_{k}).
           \end{align}
   \end{subequations}

   \item Update step:
   \begin{subequations}
           \begin{align}
               K_{k} &= \text{attention}(\Delta \hat{x}_{k-1}, \Delta y_{k}), \\
               \hat{x}_{k} &= \check{x}_{k} + K_{k}(y_{k} - \hat{y}_{k}).
           \end{align}
   \end{subequations}
\end{itemize}

\subsection{Network Structure}

As shown in Fig. \ref{OverallArch}(b), 
the entire self-attention network consists of two linear embedding layers for initial processing, 
a positional encoding layer to integrate sequence position information, 
a simplified attention layer for capturing dependencies within the sequence, 
and two fully connected layers within a multi-layer perceptron (MLP) block to enrich feature representation.
It concludes with a linear mapping layer that projects the processed features to predict the Kalman gain $K_{k}$. 
Starting with input features $\Delta x_{k-1} \in \mathbb{R}^{(B, s, m)}$ and $\Delta y_{k} \in \mathbb{R}^{(B, s, n)}$,
the network fuses and transforms these inputs into a sequence $X_{\text{input}} \in \mathbb{R}^{(B, 2s, d_{\text{model}})}$ via the embedding layers.
$X$ is then enhanced for dependency recognition in the attention layer 
and further processed by the MLP to capture non-linear characteristics, \
finally outputting the predicted Kalman gain $K_{k} \in \mathbb{R}^{(B, m*n)}$.

\subsection{Network Training}

\subsubsection{Training}

We conduct end-to-end training of the entire AtKF framework. 
The training dataset is constructed from a nonlinear system (\ref{nonlinearSys}) and includes noise that is generated randomly.
Each instance in the training dataset captures the true state and system output across a defined period,
with the dataset encompassing $N$ instances, each spanning $L$ time steps. 
The $i$-th training data instance is specified as (\ref{trainData}),
\begin{equation} \label{trainData}
    D_{i} = \{(x_{j}, y_{j}) | j = 1, \cdots, L \}.
\end{equation}
Assuming $\theta$ represents all trainable parameters of AtKF, 
the network is trained using a loss function defined as (\ref{LossFunc}),
\begin{equation} \label{LossFunc}
    \mathcal{L}(\theta) = \frac{1}{L} \sum_{j=1}^{L} \Vert x_{j} - \hat{x}_{j} \Vert ^{2}.
\end{equation}
Furthermore, since the loss is derivatively related to the network output, 
the Kalman gain $K_k$, as formulated in (\ref{end2endEquation}),
\begin{equation} \label{end2endEquation}
    \begin{aligned}
       \frac{\partial \Vert x_{k} - \hat{x}_{k} \Vert ^{2}}{\partial K_{k}} 
                 &= \frac{\partial \Vert K_{k} \Delta y_{k} - \Delta x_{k} \Vert ^{2}}{\partial K_{k}} \\
                 &= 2 \cdot (K_{k} \cdot \Delta y_{k} - \Delta \tilde{x}_{k}) \cdot \Delta y_{k}^{\rm{T}},
    \end{aligned}
\end{equation}
where $\Delta \tilde{x}_{k} \triangleq x_{k} - \check{x}_{k}$, 
thus enabling the entire AtKF framework to be trained end-to-end through backpropagation.

\subsubsection{PreTraining} \label{preTrainSection}

The recursive nature of the KF leads to nested forward and backward propagation during training, 
posing risks of gradient issues and making the training unstable.
Moreover, this recursive training process fails to fully exploit the parallel processing strengths of the self-attention network.
To address these issues, we propose a pre-training method based on batch estimation and LTPWL.
Here, we first linearize the system with LTPWL 
and then use batch estimation to directly estimate the system states over a period at once,
thus generating pre-training data. 
This method avoids the recursive limitations of KF and enhances training stability and efficiency.

Batch estimation requires a linear system; 
thus, we consider approximating a nonlinear system (\ref{nonlinearSys}) through LTPWL.
Assuming that a state trajectory $X_{Tra}=\{x_{i}|i=1, \dots, L \}$ with $L$ state points is derived from 
the initial state estimate $\hat{x}_{0}$ and $f(x)$, select all state points as linearization points. 
At each point $x_{i}$, a linear segment is constructed through a first-order Taylor expansion, 
as shown in (\ref{1orderTaylor_literal}):
\begin{subequations} \label{1orderTaylor_literal}
    \begin{align}
        l_{i}^{f} &= f(x_i) + {\frac{\partial f}{\partial x}}|_{x_i}(x - x_i), \\
        l_{i}^{h} &= h(x_i) + {\frac{\partial h}{\partial x}}|_{x_i}(x - x_i),
    \end{align}
\end{subequations}
where $l_{i}^{f}$ and $l_{i}^{h}$ represent the linear segments constructed at $x_{i}$ for $f(x)$ and $h(x)$, respectively.
Then, the LTPWL expression can be constructed, with $f_{\text{LTPWL}}(x) \approx f(x)$ and $h_{\text{LTPWL}}(x) \approx h(x)$.
Assuming training data (\ref{trainData}) from time step 1 to $L$, 
the application of batch estimation for this LTPWL model is formulated as detailed in Lemma \ref{batchEstLemma}.
\begin{lemma} \label{batchEstLemma}
    Define vectors $z$ and $x$ as in (\ref{xz_definition}),
    \begin{subequations} \label{xz_definition}
        \begin{align}
            z &= \begin{bmatrix}
                \check{x}_1^{\rm{T}} & u_2^{\rm{T}} & \cdots & u_L^{\rm{T}} & \bar{y}_1^{\rm{T}} & \bar{y}_2^{\rm{T}} & \cdots & \bar{y}_L^{\rm{T}}
            \end{bmatrix}^{\rm{T}}, \label{z_definition} \\
            x &= \begin{bmatrix}
                x_1^{\rm{T}} & \cdots & x_L^{\rm{T}}
            \end{bmatrix}^{\rm{T}},
        \end{align}
    \end{subequations} 
    where $u_{k+1}=f(x_k)-{\frac{\partial f}{\partial x}}|_{x_k} x_{k}$, 
    $\bar{y}_{k} = y_{k} - (h(x_k) - {\frac{\partial h}{\partial x}}|_{x_k} x_k)$, 
    $A_{k} = {\frac{\partial f}{\partial x}}|_{x_k}$ and $C_{k}={\frac{\partial h}{\partial x}}|_{x_k}$.
    $Q_{k}$ and $R_{k}$ are noise covariance matrices in moment $k$.

    And define matrices $H$ and $W$ as in (\ref{H_matrix}) and (\ref{W_matrix}),
    \begin{equation} \label{H_matrix}
        H = \left[ \begin{array}{cccc}
            1 & & & \\
            -A_1 & 1 & & \\
            & \ddots & \ddots & \\
            & & -A_{L-1} & 1 \\
            C_1 & & & \\
            & C_2 &  & \\
            & &\ddots & \\
            & & & C_L
        \end{array} \right],
    \end{equation}
    \begin{equation} \label{W_matrix}
        W = \left[ \begin{array}{cccccccc}
            \check{P}_1 & & & & & & & \\
            & Q_{2} & & & & & & \\
            &  & \ddots & & & & & \\
            &  &  &  Q_{L} & & & & \\
            &  &  &  & R_{1} & & & \\
            &  &  &  & & R_{2} & & \\
            &  &  &  & & & \ddots & \\
            &  &  &  & & & & R_{L} \\
        \end{array} \right],
    \end{equation}
    here $\check{x}_1$ represents the prior estimate of the system state at the first moment, and $\boldsymbol{x}$ includes 
    the true states from moment 1 to $L$.
    Then the posterior estimate $\hat{x}$ satisfies (\ref{batchEst}),
    \begin{equation} \label{batchEst}
        (H^{\rm{T}} W^{-1} H)\hat{x} = H^{\rm{T}} W^{-1} z.
    \end{equation}    
    Let $\hat{x} = (H^{\rm{T}} W^{-1} H)^{-1} H^{\rm{T}} W^{-1} z$, thus completing the batch estimation over moments 1 to $L$.
\end{lemma}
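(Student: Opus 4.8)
The plan is to derive the batch estimation equation~(\ref{batchEst}) by assembling the linearized system dynamics over the horizon $1$ to $L$ into a single stacked linear-Gaussian measurement model and then applying the standard weighted least-squares (equivalently, maximum a posteriori) solution. First I would write out, for each $k$, the linearized state equation $x_k = A_{k-1}x_{k-1} + u_k + w_k$ and the linearized output equation $y_k = C_k x_k + \bar{\text{noise}} + v_k$, where the substitutions $u_{k+1} = f(x_k) - A_k x_k$ and $\bar{y}_k = y_k - (h(x_k) - C_k x_k)$ come directly from the first-order Taylor segments in~(\ref{1orderTaylor_literal}); the key observation is that the LTPWL expression lets us pick, at each $x_k$, the single active affine piece $l_i^f$, $l_i^h$ that governs the local dynamics, so that $A_k$ and $C_k$ are well defined along the trajectory.

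Next I would stack these relations. Collecting the prior $\check{x}_1$, the process terms $u_2,\dots,u_L$, and the corrected observations $\bar{y}_1,\dots,\bar{y}_L$ into the vector $z$ of~(\ref{z_definition}), and the true states into $x$, the assembled system reads $z = Hx + e$, where $H$ is the block bidiagonal-plus-block-diagonal matrix of~(\ref{H_matrix}) and $e$ is a zero-mean noise vector whose block-diagonal covariance is exactly $W$ in~(\ref{W_matrix}): the top-left block $\check{P}_1$ is the prior covariance, the blocks $Q_2,\dots,Q_L$ are the process-noise covariances, and $R_1,\dots,R_L$ are the measurement-noise covariances, all mutually independent. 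I would verify the row structure of $H$ against $z$ entry by entry — the first block row gives $\check{x}_1 = x_1 + (\check{x}_1 - x_1)$, the $k$-th dynamics row gives $u_k = -A_{k-1}x_{k-1} + x_k - w_k$, and the observation rows give $\bar{y}_k = C_k x_k + v_k$ — so that $z = Hx + e$ holds identically.

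Then I would form the weighted least-squares / MAP objective $J(x) = (z - Hx)^{\rm T} W^{-1}(z - Hx)$, whose minimizer is the posterior (Gaussian) estimate; setting $\nabla_x J = 0$ yields the normal equations $(H^{\rm T} W^{-1} H)\hat{x} = H^{\rm T} W^{-1} z$, which is~(\ref{batchEst}), and solving gives the closed form $\hat{x} = (H^{\rm T} W^{-1} H)^{-1} H^{\rm T} W^{-1} z$. The main obstacle I anticipate is not the optimization step, which is textbook, but rather the bookkeeping in the second step: getting the block indices of $H$, the ordering of the entries of $z$, and the pairing $Q_{k+1}\leftrightarrow$ (transition into time $k{+}1$), $R_k\leftrightarrow$ (measurement at time $k$) all consistent, and arguing that the full-rank / invertibility condition on $H^{\rm T} W^{-1} H$ holds — this is guaranteed because $W\succ 0$ and $H$ has full column rank (the identity blocks on the diagonal of the dynamics part already pin down all $L$ state blocks). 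Once that alignment is nailed down, the equality~(\ref{batchEst}) follows immediately.
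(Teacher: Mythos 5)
Your proposal is correct and follows essentially the same route as the paper: reduce the LTPWL approximation to a discrete-time linear time-varying system with matrices $A_k$, $C_k$ and shifted inputs/outputs $u_k$, $\bar{y}_k$, and then read off (\ref{batchEst}) as the batch linear-Gaussian (MAP / weighted least-squares) estimate for the stacked model $z = Hx + e$ with block covariance $W$. The only difference is that the paper delegates the stacked normal-equation step to a citation of Barfoot's text, whereas you derive it explicitly (including the full-column-rank observation about $H$), which is a more self-contained version of the same argument.
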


\begin{proof}
    The constructed LTPWL system is equivalent to a discrete-time linear time-varying system, 
    where the dynamics of the system at each moment are determined by the state variables. 
    With the evaluation properties of LTPWL described in the preliminaries, 
    it is straightforward to derive the transition matrix $A_{k}$ 
    and observation matrix $C_{k}$ for every moment. 
    This leads to a concise system model representation, as outlined in (\ref{LTPWL_sys}),
    \begin{subequations} \label{LTPWL_sys}
        \begin{align}
            &x_{k} = \begin{matrix}
                \underbrace{{\frac{\partial f}{\partial x}}|_{x_{k-1}} x_{k-1} } \\ A_{k-1} x_{k-1}
            \end{matrix} + 
            \begin{matrix}
                \underbrace{(f(x_{k-1})-{\frac{\partial f}{\partial x}}|_{x_{k-1}} x_{k-1})} \\ u_{k}
            \end{matrix}
             + w_{k}, \\
            & \begin{matrix}
                \underbrace{y_{k} - (h(x_k) - {\frac{\partial h}{\partial x}}|_{x_k} x_k)} \\ \bar{y}_{k}
            \end{matrix}
            = 
            \begin{matrix}
                \underbrace{{\frac{\partial h}{\partial x}}|_{x_k} x_{k}} \\ C_{k} x_{k}
            \end{matrix} + v_{k},
        \end{align}
    \end{subequations}
    where $u_{k}$ and $\bar{y}_{k}$ represent the system input and observation at time $k$, $w_{k}$ and $v_{k}$ represent the 
    process and observation noise, respectively, both following Gaussian distributions. 
    According to \cite{barfoot2017state}, for the linear time-varying system (\ref{LTPWL_sys}), the batch posterior estimate 
    can be derived using (\ref{batchEst}), in which the vector $z$, the matrices $H$ and $W$ are defined as in 
    (\ref{z_definition}), (\ref{H_matrix}) and (\ref{W_matrix}), respectively.

\end{proof}

In summary, with training data (\ref{trainData}) available, 
we systematically construct the matrices $A_{i}$ and $C_{i}$ for each moment based on the true states $x_{i}$, 
facilitating batch estimation from moments 1 to $L$ as outlined in Lemma \ref{batchEstLemma}. 
Based on the batch estimated values $\hat{x} = \begin{bmatrix} \hat{x}_1^{\rm{T}} & \cdots & \hat{x}_L^{\rm{T}} \end{bmatrix}^{\rm{T}}$, 
alongside the system model $f(x)$, $h(x)$, and the initial training dataset, 
we prepare a pre-training dataset. 
This dataset comprises input features (\ref{InputSeq}) for the network and the corresponding true state values, 
with each pre-training instance specified as (\ref{preTrainData}),
\begin{equation} \label{preTrainData}
    D^{\text{pre}}_{i} = \{(X_{\text{input},j}, x_{j}) | j = 1, \cdots, L \}.
\end{equation}

The pre-training dataset is denoted as $\mathcal{D}^{\text{pre}} = \{D_{i}^{\text{pre}} | i = 1, \cdots, N \}$. 
By employing the loss function (\ref{LossFunc}), 
the network can be pre-trained in a batch, avoiding recursion. 
This pre-training serves as a better starting point for subsequent formal training.


\section{EXPERIMENTS}

This section presents simulation experiments on a two-dimensional nonlinear system \cite{revach2022kalmannet} with AtKF, 
under varying noise conditions and model mismatches. 
Results are compared to those from the traditional EKF, unscented Kalman filter (UKF), particle filter (PF) and KalmanNet.

\subsection{System Function and Parameters}

The system function is given by (\ref{syntheticNonlinearModel}), where both the system state and output are 
two-dimensional vectors, i.e., $x,y \in \mathbb{R}^2$. The parameters of the system are shown in Table 
\ref{syntheticNonlinearModel_Parameter}.
\begin{subequations} \label{syntheticNonlinearModel}
    \begin{align}
        x_{k} &= \alpha \cdot \sin (\beta \cdot x_{k-1} + \phi) + \delta + w_{k}, \label{syntheticNonlinearState} \\
        y_{k} &= a \cdot (b \cdot x_{k} + c)^2 + v_{k}. \label{syntheticNonlinearObserve}
    \end{align}
\end{subequations}
\begin{table}
    \centering
    \caption{Parameters of the Two-Dimensional Nonlinear Model}
            \begin{tabular}{cccccccc}
            \toprule
            & $\alpha$ & $\beta$ & $\phi$ & $\delta$ & $a$ & $b$ & $c$ \\
            \midrule
            ${Para}_s$ & 0.9 & 1.1 & $0.1\pi$  & 0.01 & 1 & 1 & 0 \\
            \hline
            ${Para}_m$ & 1 & 1 & 0 & 0 & 1 & 1 & 0 \\
            \bottomrule
            \end{tabular}
    \label{syntheticNonlinearModel_Parameter}
\end{table}

${Para}_s$ represents the true parameters of the system, and ${Para}_m$ denotes the parameters of the model.
The system's state transition function (\ref{syntheticNonlinearState}) and observation function 
(\ref{syntheticNonlinearObserve}) are both nonlinear. 
$w_{k}$ and $v_{k}$ represent the process noise and observation noise, respectively, both assumed to be Gaussian 
white noise with covariance matrices denoted by $Q$ and $R$.

\subsection{Experimental Setup}

We use the original nonlinear model (\ref{syntheticNonlinearModel}), starting with the initial state $x_{0} = [0.1, 0.1]^{\rm{T}}$. 
datasets for training, validation, and testing are generated under random noise. 
The training dataset contains $N=1000$ data entries, each with $L=10$ time steps. 
The validation dataset contains $N=100$ data entries, each with $L=10$ time steps. 
The test dataset contains $N=200$ data entries, each with $L=100$ time steps.
Moreover, to generate pre-training data,
a noise-free state trajectory of 10 time steps is produced using the same nonlinear model and initial state.

For training parameters, the self-attention network's sliding window size was set to $s=4$, with a batch size of 50 and 
a learning rate of 1e-4. For the AtKF, pre-training was conducted for 50 epochs and the subsequent training phase for 20 epochs. 
To ensure fairness, KalmanNet was trained for 70 epochs.
All training processes are conducted on a GTX-3090 GPU.

\subsection{Results and Analysis}

\subsubsection{Noise Robustness}

By setting the weight coefficients $q^{2} = r^{2}$ to different values, 
simulation experiments are conducted for various noise 
levels and the model used by the filter is consistent with the true system. 
The results are shown in Table \ref{diffNoise}.
\begin{table} [h]
    \centering
    \caption{Estimation Error (MSE) for the Two-Dimensional Nonlinear Model Under Different Noise Levels}
    \resizebox{0.95\linewidth}{!}{
         \begin{tabular}{cccccc}
            \toprule
            $q^{2} = r^{2}$ & 1 & 2 & 4 & 8  &  16 \\
            \midrule
            EKF        & 3.0216          & 7.6312          & 20.5524         & 64.4445      & 218.2332 \\
            \hline
            UKF        & 3.1972          & 9.8430          & 29.0027         & 103.1582      & 460.5745 \\
            \hline
            PF         & \textbf{1.4986} & \textbf{2.8381} & 5.6377         & 11.2771      & 23.9068 \\
            \hline
            KalmanNet  & 1.6303          & 3.3716          & 6.4136          & 9.6848      & 18.5984 \\
            \hline
            AtKF       & 1.6175         & 2.9235 & \textbf{4.9186} & \textbf{8.7522} & \textbf{16.6712} \\
            \bottomrule
            \end{tabular}}
    \label{diffNoise}
\end{table}
It is obvious that when the noise is significant, our AtKF shows superior performance compared with other filters.
Although our performance is similar to PF under low noise, AtKF significantly surpass PF under high noise.
Specifically when $r^{2} = 16$, AtKF achieves an MSE of 16.6712, while our performance is 30\% better than PF. 
This superiority is attributed to the attention network's ability to compensate for noise. 
Since our capability to capture dependencies among sequences and start from a good initialization point through pre-training, our results outperform KalmanNet.
Fig. \ref{stateTrajectory} shows the first component of the true state values and the estimated values from different filters 
for a selected data entry in the test dataset. It is evident that AtKF provides the best tracking of the true state. 
In conclusion, the results demonstrate that the noise characteristics in the system can be better 
fitted with the assistance of neural networks, leading to improved estimation results. Furthermore, networks based on the 
attention mechanism achieve better estimation performance by more comprehensively capturing the dependencies among sequences.

\begin{figure} 
    \centering
    \includegraphics[width=0.35\textwidth]{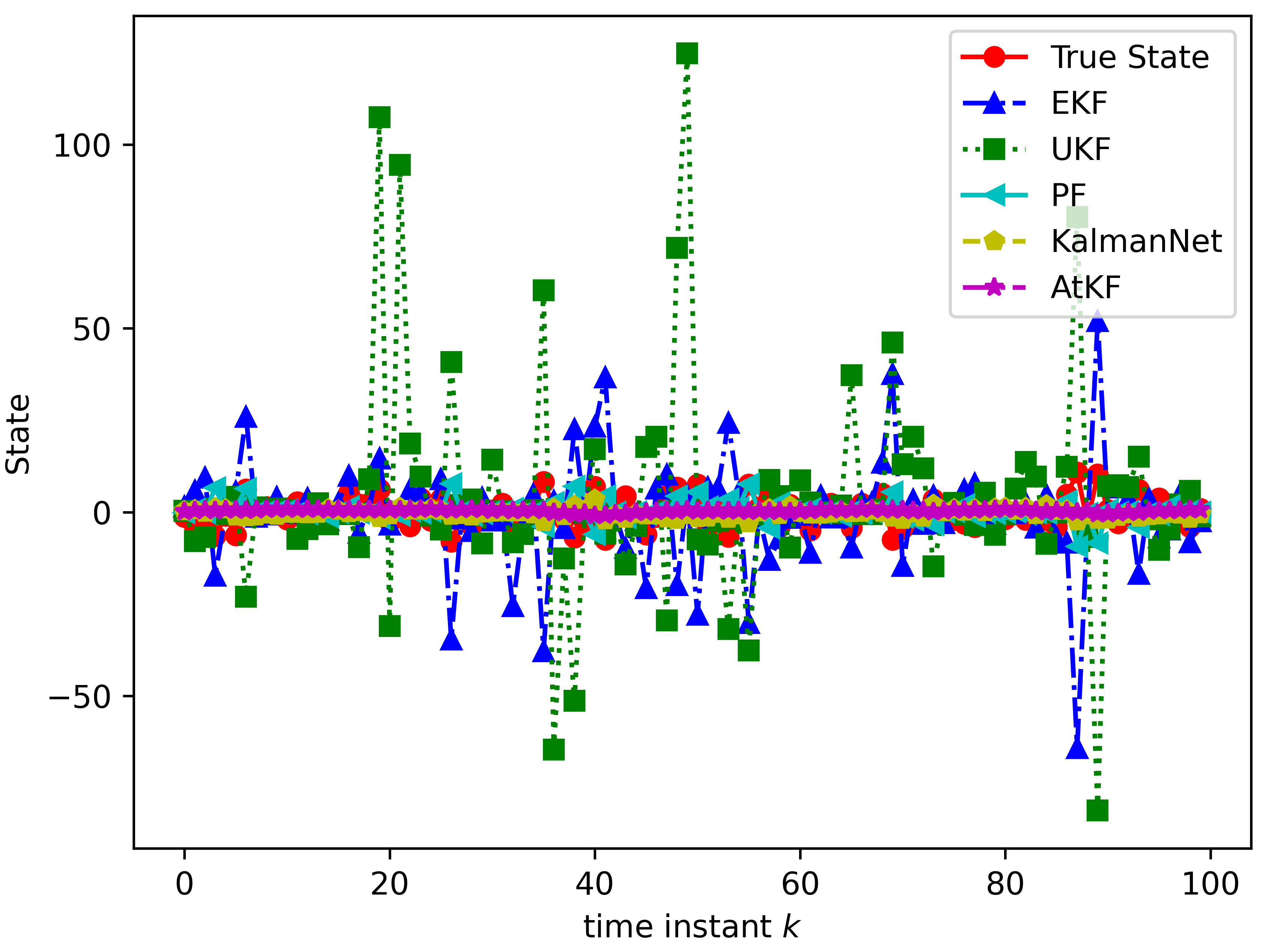}
    \caption{The true state and the estimated state (dimension 1) from different filters for data selected in the test dataset.}
    \label{stateTrajectory}
\end{figure}

\subsubsection{Model Mismatch}
Here we assume a difference between the model used by the filter and the true system. Simulation experiments are 
conducted under model mismatch conditions for different noise levels. The results are shown in Table \ref{misMatch}.
\begin{table} [h]
    \centering
    \caption{Estimation Error (MSE) for the Two-Dimensional Nonlinear Model Under Model Mismatch Conditions}
    \resizebox{\linewidth}{!}{
        \begin{tabular}{cccccc}
            \toprule
            $q^{2} = r^{2}$ & 1 & 2 & 4 & 8  &  16 \\
            \midrule
            EKF        & 3.7272          & 8.1047          & 20.2963         & 60.7735      & 211.4128 \\
            \hline
            UKF        & 3.7043          & 7.8158         & 24.1762         & 81.3889      & 319.6542 \\
            \hline
            PF         & 1.6882          & 2.9876          & 5.8008         & 11.3298      & 23.9946 \\
            \hline
            KalmanNet  & 2.0326          & 3.2508          & 6.3598          & 9.5641      & 18.6151 \\
            \hline
            AtKF       & \textbf{1.4880} & \textbf{2.8058} & \textbf{4.5026} & \textbf{8.4523} & \textbf{16.5934} \\
            \bottomrule
        \end{tabular}}
    \label{misMatch}
\end{table}
It is observed that filters embedded with neural networks remain widely better than model-based filtering. Furthermore, the 
attention mechanism network-based AtKF outperforms the GRU-based KalmanNet. This shows that AtKF offers greater robustness, 
and attention mechanism networks are better at capturing the dependencies between state 
sequences, thereby achieving superior filtering results.


\section{CONCLUSIONS}

This paper introduces the attention Kalman filter (AtKF), a novel approach that integrates self-attention with the KF to 
improve the accuracy and robustness of state estimation. AtKF addresses traditional KFs' shortcomings in handling system model 
inaccuracies and noise parameters. Specifically, AtKF uses self-attention to comprehensively capture dependencies in state 
sequences and perform an innovative pre-training strategy, which includes LTPWL for system linearization and batch estimation 
for data generation. AtKF addresses the challenges of instability and inefficiency associated with recursive training. 
Experiments with a two-dimensional nonlinear system demonstrate AtKF's effectiveness in managing noise disturbances and 
model mismatches. This work enhances KF's performance with neural network architectures and paves the way for future research 
on integrating data-driven techniques with traditional estimation methods for complex systems.

\bibliographystyle{IEEEtran}
\bibliography{IEEEexample}

\begin{thebibliography}{10}
\providecommand{\url}[1]{#1}
\csname url@rmstyle\endcsname
\providecommand{\newblock}{\relax}
\providecommand{\bibinfo}[2]{#2}
\providecommand\BIBentrySTDinterwordspacing{\spaceskip=0pt\relax}
\providecommand\BIBentryALTinterwordstretchfactor{4}
\providecommand\BIBentryALTinterwordspacing{\spaceskip=\fontdimen2\font plus
\BIBentryALTinterwordstretchfactor\fontdimen3\font minus
  \fontdimen4\font\relax}
\providecommand\BIBforeignlanguage[2]{{%
\expandafter\ifx\csname l@#1\endcsname\relax
\typeout{** WARNING: IEEEtran.bst: No hyphenation pattern has been}%
\typeout{** loaded for the language `#1'. Using the pattern for}%
\typeout{** the default language instead.}%
\else
\language=\csname l@#1\endcsname
\fi
#2}}

\bibitem{kalman1960}
R.~E. Kalman, ``{A New Approach to Linear Filtering and Prediction Problems},''
  \emph{Journal of Basic Engineering}, vol.~82, no.~1, pp. 35--45, 1960.

\bibitem{maybeck1982stochastic}
P.~S. Maybeck, \emph{Stochastic models, estimation, and control}.\hskip 1em
  plus 0.5em minus 0.4em\relax Academic press, 1982.

\bibitem{bai2023state}
Y.~Bai, B.~Yan, C.~Zhou, T.~Su, and X.~Jin, ``State of art on state estimation:
  Kalman filter driven by machine learning,'' \emph{Annual Reviews in Control},
  vol.~56, p. 100909, 2023.

\bibitem{gao2020rl}
X.~Gao, H.~Luo, B.~Ning, F.~Zhao, L.~Bao, Y.~Gong, Y.~Xiao, and J.~Jiang,
  ``Rl-akf: An adaptive kalman filter navigation algorithm based on
  reinforcement learning for ground vehicles,'' \emph{Remote Sensing}, vol.~12,
  no.~11, p. 1704, 2020.

\bibitem{tian2021state}
J.~Tian, R.~Xiong, W.~Shen, and J.~Lu, ``State-of-charge estimation of lifepo4
  batteries in electric vehicles: A deep-learning enabled approach,''
  \emph{Applied Energy}, vol. 291, p. 116812, 2021.

\bibitem{jung2020mnemonic}
S.~Jung, I.~Schlangen, and A.~Charlish, ``A mnemonic kalman filter for
  non-linear systems with extensive temporal dependencies,'' \emph{IEEE Signal
  Processing Letters}, vol.~27, pp. 1005--1009, 2020.

\bibitem{LongShortTermMemory}
S.~Hochreiter and J.~Schmidhuber, ``Long short-term memory,'' \emph{Neural
  Computation}, vol.~9, no.~8, pp. 1735--1780, 1997.

\bibitem{revach2022kalmannet}
G.~Revach, N.~Shlezinger, X.~Ni, A.~L. Escoriza, R.~J. Van~Sloun, and Y.~C.
  Eldar, ``Kalmannet: Neural network aided kalman filtering for partially known
  dynamics,'' \emph{IEEE Transactions on Signal Processing}, vol.~70, pp.
  1532--1547, 2022.

\bibitem{chung2014empirical}
J.~Chung, C.~Gulcehre, K.~Cho, and Y.~Bengio, ``Empirical evaluation of gated
  recurrent neural networks on sequence modeling,'' in \emph{NIPS 2014 Workshop
  on Deep Learning}, 2014.

\bibitem{vaswani2017attention}
A.~Vaswani, N.~Shazeer, N.~Parmar, J.~Uszkoreit, L.~Jones, A.~N. Gomez,
  {\L}.~Kaiser, and I.~Polosukhin, ``Attention is all you need,''
  \emph{Advances in Neural Information Processing Systems}, vol.~30, 2017.

\bibitem{tarela1999region}
J.~Tarela and M.~Martínez, ``Region configurations for realizability of
  lattice piecewise-linear models,'' \emph{Mathematical and Computer
  Modelling}, vol.~30, no.~11, pp. 17--27, 1999.

\bibitem{lin1994explicit}
J.-N. Lin and R.~Unbehauen, ``Explicit piecewise-linear models,'' \emph{IEEE
  Transactions on Circuits and Systems I: Fundamental Theory and Applications},
  vol.~41, no.~12, pp. 931--933, 1994.

\bibitem{tarela1990representation}
J.~Tarela, E.~Alonso, and M.~Martínez, ``A representation method for pwl
  functions oriented to parallel processing,'' \emph{Mathematical and Computer
  Modelling}, vol.~13, no.~10, pp. 75--83, 1990.

\bibitem{ovchinnikov2002max}
S.~Ovchinnikov, ``Max-min representation of piecewise linear functions,''
  \emph{Beitr{\"a}ge zur Algebra und Geometrie}, vol.~43, no.~1, pp. 297--302,
  2002.

\bibitem{xu2016irredundant}
J.~Xu, T.~J. {van den Boom}, B.~{De Schutter}, and S.~Wang, ``Irredundant
  lattice representations of continuous piecewise affine functions,''
  \emph{Automatica}, vol.~70, pp. 109--120, 2016.

\bibitem{wang2021lattice}
J.~Wang, J.~Xu, and S.~Wang, ``Lattice trajectory piecewise linear method for
  the simulation of diode circuits,'' \emph{IEEE Transactions on Circuits and
  Systems I: Regular Papers}, vol.~68, no.~5, pp. 2069--2081, 2021.

\bibitem{barfoot2017state}
T.~D. Barfoot, \emph{State estimation for robotics}.\hskip 1em plus 0.5em minus
  0.4em\relax Cambridge University Press, 2017.

\end{thebibliography}

\end{document}